\documentclass[letterpaper,12pt]{article}
\usepackage[backend=biber,style=numeric,sorting=nyt,url=false,doi=false,isbn=false,eprint=false]{biblatex}
\usepackage{graphicx}
\usepackage{palatino,mathrsfs}
\usepackage[mathcal]{euler}
\usepackage{xcolor}
\usepackage{epstopdf,epsfig}
\usepackage{pdfpages}
\usepackage{comment}
\usepackage{amsmath,amsthm,amsfonts,amssymb,latexsym,amscd,enumerate,url,hyperref}
\usepackage{mathtools}
\usepackage[all,knot]{xy}
\xyoption{arc}
\usepackage{multirow}
\usepackage{caption}
\usepackage[new]{old-arrows}
\usepackage{nicematrix}
\usepackage{bm}
\usepackage[T1]{fontenc}
\usepackage[utf8]{inputenc}
\usepackage{microtype}

\newtheorem{theorem}{Theorem}[section]
\newtheorem*{theorem*}{Theorem}

\newtheorem*{corollary*}{Corollary}
\newtheorem{lemma}{Lemma}[section]
\newtheorem*{lemma*}{Lemma}
\newtheorem{proposition}{Proposition}[section]
\newtheorem*{proposition*}{Proposition}

\newtheorem*{claim*}{Claim}
\theoremstyle{definition}
\newtheorem{definition}{Definition}[section]
\newtheorem*{definition*}{Definition}
\newtheorem{example}{Example}[section]
\newtheorem*{example*}{Example}

\newtheorem*{question*}{Question}

\newtheorem*{problem*}{Problem}

\newtheorem*{remark*}{Remark}

\def\R{\mathbb R}
\def\C{\mathbb C}
\def\N{\mathbb N}
\def\Z{\mathbb Z}

\def\fg{\mathfrak{g}}
\def\fk{\mathfrak{k}}

\def\fl{\mathfrak{l}}
\def\fp{\mathfrak{p}}

\def\g{\boldsymbol{\mathfrak{g}}}

\def\A{\mathbb{A}}
\def\fgfam{\boldsymbol{\mathfrak{g}}}
\def\flfam{\boldsymbol{\mathfrak{l}}}

\DeclareMathOperator{\Ad}{Ad}
\DeclareMathOperator{\diag}{diag}
\DeclareMathOperator{\GL}{GL}

\DeclareMathOperator{\Sub}{Sub}
\DeclareMathOperator{\im}{Im}
\DeclareMathOperator{\pr}{Pr}

\title{Dual contractions and algebraic families}
\author{Eyal Subag\thanks{Department of Mathematics, Bar-Ilan University, Ramat-Gan, 5290002 Israel.}}
\date{}

\begin{document}

\maketitle
\begin{abstract}
We introduce a duality  for In\"on\"u-Wigner contractions attached to real symmetric Lie algebras. Starting from a symmetric pair $(\fg,\theta)$, we define a dual real form $\fg^{*}$ inside the complexification of $\fg$ and consider the corresponding contraction with respect to the common fixed-point subalgebra $\fg^{\theta}$. The main result shows that the original contraction and its dual appear as real fibers of a single algebraic family of complex Lie algebras equipped with an anti-holomorphic involution. This places the two contractions in one geometric framework and connects them with the algebraic-family methods developed in recent work on contractions, real forms, and hidden symmetries.
\end{abstract}

\section{Introduction}
Contractions of Lie algebras provide a systematic way to pass from one symmetry algebra to another by a limiting procedure. A classical example in physics is the passage from the Poincar\'e algebra to the Galilei algebra in the non-relativistic limit. A closely related example, and one that is more directly relevant for this paper, is that both $\mathfrak{so}(4)$ and $\mathfrak{so}(3,1)$ contract with respect to their common symmetric subalgebra $\mathfrak{so}(3)$ to the Euclidean algebra $\mathfrak{iso}(3)$; see Chapter~10 of Ref.~\cite{MR1275599} and also Refs.~\cite{MR779059,SBBM}. This phenomenon is not isolated: many classical contractions arise from pairs of Lie algebras sharing a common subalgebra; see, for example, \cite[Ch. 10]{MR1275599}.

A particularly important class of contractions is formed by the In\"on\"u-Wigner contractions introduced in Ref.~\cite{MR55352}; see also Ref.~\cite{Saletan61}. These contractions are associated with a decomposition $\fg=\fk\oplus\fp$ in which $\fk$ is a Lie subalgebra. The resulting contracted algebra is the semidirect product $\fk\ltimes \fp$, where $\fp$ becomes an abelian ideal. When $\fk$ is the fixed-point subalgebra $\fg^{\theta}$ of an involution $\theta$ of $\fg$, the pair $(\fg,\theta)$ is a symmetric Lie algebra, and the contraction takes the form
\begin{equation}
\fg^{\theta}\ltimes \fg^{-\theta}.
\end{equation}

Motivated in part by the semisimple duality appearing in \cite{FLENSTEDJENSEN1978106}, we introduce a dual real symmetric Lie algebra 
\begin{equation}
\fg^{*}=\fg^{\theta}\oplus i\fg^{-\theta}
\end{equation}  inside the complexification  of $\fg$.

We call the In\"on\"u-Wigner contraction of $\fg^{*}$ with respect to the same fixed-point subalgebra $\fg^{\theta}$ the \emph{dual contraction} of the original one.




Our main theorem shows that the original contraction and its dual are naturally linked by an algebraic family of Lie algebras $\g$. More precisely, there is a real structure of $\g$ such that  the fibers of the corresponding family of real Lie algebras $\g^{\boldsymbol{\sigma}}$ satisfy  
\[\g^{\boldsymbol{\sigma}}|_t\cong \begin{cases}
 \fg^*,&  t<0 \\
 \fg^{\theta}\ltimes \fg^{-\theta},& t=0\\
  \fg,& t>0.
\end{cases}\]

Algebraic families of Lie algebras and Harish-Chandra pairs were introduced and studied in Refs.~\cite{eyalbern,MR4123111,MR3797197}. They provide a natural framework for relating different real forms of a complex Lie algebra as well as their contractions. One of the motivations for the present work comes from the hidden-symmetry families associated with the hydrogen atom, studied in Refs.~\cite{MR3827131,MR4460278}.
In those papers, a natural algebraic family of complex Lie algebras was constructed, together with a real structure whose real fibers satisfy 
\[\g^{\boldsymbol{\sigma}}|_E\cong \begin{cases}
 \mathfrak{so}(n+1),&  E<0 \\
 \mathfrak{so}(n)\ltimes \R^n,& E=0\\
  \mathfrak{so}(n,1),& E>0.
\end{cases}\]
In particular, the family  encodes the usual  hidden symmetries of the hydrogen atom and simultaneously interpolates the dual contractions
\begin{equation}
\mathfrak{so}(n+1)\longrightarrow \mathfrak{so}(n)\ltimes \R^{n}
\qquad \text{and} \qquad
\mathfrak{so}(n,1)\longrightarrow \mathfrak{so}(n)\ltimes \R^{n}.
\end{equation}

The paper is organized as follows. In Sec.~\ref{sec:contractions} we review contractions of Lie algebras and the relation between simple In\"on\"u-Wigner contractions and subalgebras. In Sec.~\ref{sec:families} we recall the notion of an algebraic family of Lie algebras over the affine line. In Sec.~\ref{sec:dual-contraction} we define the dual symmetric Lie algebra and the dual contraction. In Sec.~\ref{sec:main} we prove that the two dual contractions occur as fibers of a single algebraic family, and we recall an explicit realization of that family. Appendix~\ref{sec:appendix} contains two auxiliary results used in Sec.~\ref{sec:contractions} and an embedding lemma used in Sec.~\ref{subsec:explicit}.

This research was supported by the Israel Science Foundation (grant No. 1040/22).

\section{Contractions}\label{sec:contractions}
\subsection{General contractions}
Let $\fg=(V,[\cdot,\cdot])$ be a real or complex Lie algebra and let $T\in \GL(V)$. The formula
\begin{equation}
[X,Y]_{T}:=T^{-1}([T(X),T(Y)]), \qquad X,Y\in V,
\end{equation}
defines another Lie bracket on $V$, and the Lie algebra $\fg_{T}:=(V,[\cdot,\cdot]_{T})$ is isomorphic to $\fg$.

When the linear map $T$ is replaced by a family of invertible maps and one passes to a limit, the resulting Lie algebra need not be isomorphic to the original one.

\begin{definition}
Let $\fg=(V,[\cdot,\cdot])$ be a real or complex Lie algebra and let $\{T_{\epsilon}\}_{\epsilon>0}$ be a family of invertible linear maps on $V$ such that for every $X,Y\in V$ the limit
\begin{equation}
[X,Y]_{T(0)}:=\lim_{\epsilon\to 0^{+}}T_{\epsilon}^{-1}([T_{\epsilon}(X),T_{\epsilon}(Y)])
\end{equation}
exists. Then $[\cdot,\cdot]_{T(0)}$ is a Lie bracket on $V$, and the Lie algebra
\begin{equation}
\fg_{0}^{T}:=(V,[\cdot,\cdot]_{T(0)})
\end{equation}
is called the \emph{contraction} of $\fg$ with respect to $\{T_{\epsilon}\}_{\epsilon>0}$. We denote this by
\begin{equation}
\fg \xrightarrow[]{T_{\epsilon}} \fg_{0}^{T}.
\end{equation}
\end{definition}

When the dependence of $T_{\epsilon}$ on $\epsilon$ is regular (for example, continuous, differentiable, or algebraic), the family of Lie algebras $\{\fg_{\epsilon}\}_{\epsilon>0}$ with brackets $[\cdot,\cdot]_{T(\epsilon)}$ provides a controlled approximation of the limiting algebra $\fg_{0}^{T}$.

\subsection{Generalized In\"on\"u-Wigner contractions}
A contraction of a finite-dimensional Lie algebra
\begin{equation}
\fg \xrightarrow[]{T_{\epsilon}} \fg_{0}^{T}
\end{equation}
is called a \emph{generalized In\"on\"u-Wigner contraction} if there is a basis $E$ of $V$ such that, for every $\epsilon>0$, the matrix of $T_{\epsilon}$ in the basis $E$ is diagonal of the form
\begin{equation}
[T_{\epsilon}]_{E}=\diag(\epsilon^{n_{1}},\epsilon^{n_{2}},\ldots,\epsilon^{n_{\dim V}})
\end{equation}
for some integers $n_{j}\in \Z$. When all exponents belong to $\{0,1\}$, the contraction is called a \emph{simple In\"on\"u-Wigner contraction}, or simply an \emph{In\"on\"u-Wigner contraction}.

For a fixed Lie algebra $\fg$, two contractions
\begin{equation}
\fg \xrightarrow[]{T_{\epsilon}} \fg_{0}^{T}
\qquad \text{and} \qquad
\fg \xrightarrow[]{S_{\epsilon}} \fg_{0}^{S}
\end{equation}
are called \emph{equivalent} if the Lie algebras $\fg_{0}^{T}$ and $\fg_{0}^{S}$ are isomorphic.

Generalized In\"on\"u-Wigner contractions form a large and important class of contractions \cite{MR1809459}, although they are not universal in complete generality; see Ref.~\cite{Popovych2010} for a counterexample.

\subsection{Simple In\"on\"u-Wigner contractions and subalgebras}
Despite the previous remark, simple In\"on\"u-Wigner contractions are particularly important. Most contractions that arise in classical mathematical physics are of this type, and they admit a direct geometric interpretation in terms of subalgebras.

Suppose that
\begin{equation}
\fg \xrightarrow[]{T_{\epsilon}} \fg_{0}^{T}.
\end{equation}
Then for every $v\in V$ the limit
\begin{equation}
T_{0}(v):=\lim_{\epsilon\to 0^{+}}T_{\epsilon}(v)
\end{equation}
exists and defines a projection operator on $V$. Moreover, $\im(T_{0})$ is a Lie subalgebra of both $\fg$ and $\fg_{0}^{T}$, while $\ker(T_{0})$ is an abelian ideal of $\fg_{0}^{T}$. In particular,
\begin{equation}
\fg_{0}^{T}=\im(T_{0})\ltimes \ker(T_{0}).
\end{equation}
These observations go back to the original paper of In\"on\"u and Wigner \cite[Theorem~1]{MR55352}.

Conversely, let $\fk$ be a Lie subalgebra of $\fg$, and choose a vector-space complement $\fp$ so that $\fg=\fk\oplus \fp$. Denote by $\pr_{\fk}$ and $\pr_{\fp}$ the associated projections. The linear maps
\begin{equation}
T_{\epsilon}^{\fk,\fp}=\pr_{\fk}+\epsilon\pr_{\fp}, \qquad \epsilon>0,
\end{equation}
define a simple In\"on\"u-Wigner contraction of $\fg$, whose limit is
\begin{equation}
\fg_{0}^{\fk,\fp}=\fk\ltimes \fp.
\end{equation}
Here $\fp$ is an abelian ideal and, for $X\in \fk$ and $Y\in \fp$, the mixed bracket is
\begin{equation}
[X,Y]_{T^{\fk,\fp}(0)}=\pr_{\fp}([X,Y]).
\end{equation}

The choice of the complement $\fp$ does not affect the isomorphism class of the contracted Lie algebra.

\begin{lemma}\label{lem:complements-jmp}
Let $\fg$ be a real or complex Lie algebra and let $\fk$ be a subalgebra of $\fg$. If $\fp$ and $\fp'$ are vector-space complements of $\fk$ in $\fg$, then
\begin{equation}
\fg_{0}^{\fk,\fp}\cong \fg_{0}^{\fk,\fp'}.
\end{equation}
\end{lemma}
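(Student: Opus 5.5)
The natural candidate for the isomorphism is the identity on $\fk$ together with the canonical identification of each complement with the quotient $\fg/\fk$. Concretely, the restriction $\pr_{\fp}|_{\fp'}\colon \fp'\to\fp$ is a linear isomorphism. Both $\fp$ and $\fp'$ map isomorphically onto $\fg/\fk$ under the quotient map $q$, and $\pr_{\fp}|_{\fp'}$ equals $(q|_{\fp})^{-1}\circ q|_{\fp'}$. I would therefore define
\[
\Phi\colon \fg_{0}^{\fk,\fp'}\to \fg_{0}^{\fk,\fp},\qquad \Phi(X+Y')=X+\pr_{\fp}(Y'),\quad X\in\fk,\ Y'\in\fp',
\]
which is a linear bijection $\fk\oplus\fp'\to\fk\oplus\fp$. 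Equivalently, $\Phi=\pr'_{\fk}+\pr_{\fp}$, where $\pr'_{\fk},\pr'_{\fp'}$ denote the projections for the decomposition $\fg=\fk\oplus\fp'$.

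To check that $\Phi$ preserves brackets, I would use the explicit description of the contracted brackets given before the lemma. In both algebras $\fk$ is a subalgebra with its original bracket, the complement is abelian, and the mixed bracket is $[X,Y]=\pr_{\fp}([X,Y])$, respectively $\pr'_{\fp'}([X,Y'])$. The check splits into three cases.
\begin{enumerate}
\item For $X_1,X_2\in\fk$: $\Phi$ is the identity on $\fk$, so the bracket is preserved.
\item For $Y_1',Y_2'\in\fp'$: both brackets vanish.
\item For $X\in\fk$ and $Y'\in\fp'$: we need
\[
\pr_{\fp}\bigl(\pr'_{\fp'}([X,Y'])\bigr)=\pr_{\fp}\bigl([X,\pr_{\fp}(Y')]\bigr).
\]
\end{enumerate}

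The mixed case is the only real point. The key observation is that $\pr_{\fp}$ kills $\fk$, and that $\pr_{\fp}\circ\pr'_{\fp'}=\pr_{\fp}$ because $\pr'_{\fp'}(Z)-Z\in\fk$. This reduces the left side to $\pr_{\fp}([X,Y'])$. On the right, $\pr_{\fp}(Y')-Y'\in\fk$, so $[X,\pr_{\fp}(Y')]-[X,Y']\in[\fk,\fk]\subseteq\fk$, using that $\fk$ is a subalgebra. Applying $\pr_{\fp}$ therefore gives $\pr_{\fp}([X,Y'])$ as well, and the two sides agree.

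Conceptually, both contracted algebras are isomorphic to $\fk\ltimes(\fg/\fk)$, where $\fk$ acts on $\fg/\fk$ by the quotient adjoint action; this is well defined precisely because $\fk$ is a subalgebra. One could present the proof that way instead, showing each $\fg_0^{\fk,\fp}\cong\fk\ltimes\fg/\fk$ via $q|_{\fp}$. There is no substantive obstacle; the only step needing care is the mixed-bracket verification, where the subalgebra hypothesis is used.
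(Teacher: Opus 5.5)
Your proof is correct and is essentially the paper's argument: your $\Phi$ is exactly the composite $\psi_{\fp}^{-1}\circ\psi_{\fp'}$ of the paper's canonical isomorphisms $\psi_{\fp}\colon \fk\ltimes\fp\to\fk\ltimes(\fg/\fk)$. Your mixed-bracket computation supplies the verification, which the paper leaves implicit, that these maps respect the bracket, and it correctly uses that $\fk$ is a subalgebra.
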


\begin{proof}
Treat the quotient vector space $\fg/\fk$ as an abelian Lie algebra. The adjoint action $\pi$ of $\fk$ on $\fg/\fk$ is given by
\begin{equation}
\pi(X)(Y+\fk)=[X,Y]+\fk, \qquad \forall X\in \fk,\ Y\in \fg.
\end{equation}
This defines a semidirect product $\fk\ltimes (\fg/\fk)$ that depends only on $\fk$. For each complement $\fp$, the map
\begin{equation}
\psi_{\fp}:\fk\ltimes \fp \longrightarrow \fk\ltimes (\fg/\fk),
\qquad
\psi_{\fp}(X,Y)=(X,Y+\fk),
\end{equation}
is an isomorphism of Lie algebras. Hence both $\fg_{0}^{\fk,\fp}$ and $\fg_{0}^{\fk,\fp'}$ are canonically isomorphic to $\fk\ltimes (\fg/\fk)$.
\end{proof}

The preceding lemma shows that every subalgebra $\fk$ gives rise to a well-defined equivalence class of In\"on\"u-Wigner contractions. However, this dependence on $\fk$ is subtler than is sometimes suggested in the literature.

\begin{example}
The Lie algebra $\mathfrak{so}(4)$ contains two isomorphic ideals $\fk_{1}$ and $\fk_{2}$, each isomorphic to $\mathfrak{so}(3)$, such that
\begin{equation}
\mathfrak{so}(4)=\fk_{1}\oplus \fk_{2}.
\end{equation}
The contraction associated with the decomposition $\mathfrak{so}(4)=\fk_{1}\oplus \fk_{2}$ is
\begin{equation}
\fg_{0}^{\fk_{1},\fk_{2}}=\fk_{1}\ltimes \fk_{2},
\end{equation}
and in this case the action of $\fk_{1}$ on $\fk_{2}$ is trivial, so the contraction is simply the direct sum of the ideal $\fk_{1}$ with the abelian ideal $\fk_{2}$, and $\fg_{0}^{\fk_{1},\fk_{2}}\cong \mathfrak{so}(3)\oplus \R^3$.

Now choose an isomorphism $\psi:\fk_{1}\to \fk_{2}$ and set
\begin{equation}
\fk:=\{X+\psi(X):X\in \fk_{1}\},
\qquad
\fp:=\{X-\psi(X):X\in \fk_{1}\}.
\end{equation}
Then $\fk$ is again a Lie subalgebra isomorphic to $\mathfrak{so}(3)$, whereas $\fp$ is not a Lie subalgebra. As a $\fk$-module, however, $\fp$ is isomorphic to the adjoint representation of $\fk$. Clearly, as vector spaces  
\begin{equation}
\fg=\fk\oplus \fp.
\end{equation}
The corresponding contraction
\begin{equation}
\fg_{0}^{\fk,\fp}=\fk\ltimes \fp
\end{equation}
is therefore isomorphic to the Lie algebra $\mathfrak{iso}(3)=\mathfrak{so}(3)\ltimes \R^3$ of the Euclidean group, which is not isomorphic to $\fg_{0}^{\fk_{1},\fk_{2}}$.
\end{example}

To capture the correct invariant relevant to contractions, we introduce an equivalence relation on subalgebras.

\begin{definition}
Let $\fg$ be a finite-dimensional Lie algebra. Two subalgebras $\fk_{1},\fk_{2}\in \Sub(\fg)$ are called \emph{equivalent}, and we write $\fk_{1}\sim \fk_{2}$, if the Lie algebras
\begin{equation}
\fk_{1}\ltimes (\fg/\fk_{1})
\qquad \text{and} \qquad
\fk_{2}\ltimes (\fg/\fk_{2})
\end{equation}
are isomorphic. In both cases the quotient is viewed as an abelian Lie algebra equipped with the natural adjoint action of the corresponding subalgebra.
\end{definition}

The appendix contains a useful sufficient condition for two subalgebras to be equivalent; see Proposition~\ref{prop:equivalent-subalgebras-jmp}.

We denote by $\operatorname{IWC}(\fg)$ the set of equivalence classes of simple In\"on\"u-Wigner contractions of $\fg$.

\begin{theorem}\label{thm:iwc-bijection-jmp}
Let $\fg$ be a finite-dimensional Lie algebra. The assignment that sends a subalgebra $\fk\subseteq \fg$ to the equivalence class of the contraction $\fg_{0}^{\fk,\fp}$ induces a bijection
\begin{equation}
\Sub(\fg)/\sim\;\xrightarrow{\ \sim\ }\; \operatorname{IWC}(\fg).
\end{equation}
\end{theorem}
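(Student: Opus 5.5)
We will show that the assignment $\Phi:\fk\mapsto[\fg_{0}^{\fk,\fp}]$ is well defined on $\Sub(\fg)$, that it descends to $\Sub(\fg)/\sim$, and that the induced map is both injective and surjective onto $\operatorname{IWC}(\fg)$. Recall that an element of $\operatorname{IWC}(\fg)$ is an isomorphism class of contracted algebras $\fg_{0}^{T}$ coming from simple In\"on\"u-Wigner contractions of $\fg$.

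\emph{Well-definedness.} The construction requires a choice of complement $\fp$. By Lemma~\ref{lem:complements-jmp} and its proof, the class of $\fg_{0}^{\fk,\fp}$ does not depend on $\fp$. In fact $\fg_{0}^{\fk,\fp}$ is canonically isomorphic to $\fk\ltimes(\fg/\fk)$, so $\Phi(\fk)$ is the isomorphism class of $\fk\ltimes(\fg/\fk)$. The operators $T_{\epsilon}^{\fk,\fp}=\pr_{\fk}+\epsilon\pr_{\fp}$ are diagonal with exponents in $\{0,1\}$ in any basis adapted to $\fk\oplus\fp$, so $\Phi(\fk)$ genuinely lies in $\operatorname{IWC}(\fg)$. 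Now $\fk_1\sim\fk_2$ means by definition that $\fk_1\ltimes(\fg/\fk_1)\cong\fk_2\ltimes(\fg/\fk_2)$, that is, $\Phi(\fk_1)=\Phi(\fk_2)$. Hence $\Phi$ descends to the quotient, and the descended map is injective. Both directions of this last equivalence are tautological once the identification via Lemma~\ref{lem:complements-jmp} is made.

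\emph{Surjectivity.} This is the only step with real content. Let $\fg\xrightarrow{T_\epsilon}\fg_0^T$ be a simple In\"on\"u-Wigner contraction. Then there is a basis $E$ in which $[T_\epsilon]_E=\diag(\epsilon^{n_j})$ with each $n_j\in\{0,1\}$. Set $\fk=\operatorname{span}\{e_j:n_j=0\}$ and $\fp=\operatorname{span}\{e_j:n_j=1\}$. Then $T_\epsilon=\pr_{\fk}+\epsilon\pr_{\fp}$ exactly, so $T_\epsilon=T_\epsilon^{\fk,\fp}$. We expand the bracket componentwise:
\[
T_\epsilon^{-1}[T_\epsilon X,T_\epsilon Y]=\pr_{\fk}[X,Y]+\epsilon^{-1}\pr_{\fp}[X,Y]\quad (X,Y\in\fk).
\]
Existence of the limit forces $\pr_{\fp}[\fk,\fk]=0$, so $\fk$ is a subalgebra. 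This agrees with the general fact quoted from \cite[Theorem~1]{MR55352} that $\im(T_0)=\fk$ is a subalgebra.

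Since $T_\epsilon=T_\epsilon^{\fk,\fp}$, the contracted bracket is identical to the one defining $\fg_0^{\fk,\fp}$. Therefore $\fg_0^T=\fg_0^{\fk,\fp}$ and its class equals $\Phi(\fk)$, which gives surjectivity.

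The only care needed is to check the bracket limits explicitly in the remaining two cases. On $\fk\times\fp$ the limit is $\pr_{\fp}[X,Y]$, and on $\fp\times\fp$ it is $0$; both are automatic, so no further obstacle is expected.
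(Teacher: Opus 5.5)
Your proof is correct and follows the same route as the paper. Well-definedness comes from Lemma~\ref{lem:complements-jmp}, injectivity is tautological from the definition of $\sim$, and surjectivity comes from the diagonal form of a simple In\"on\"u-Wigner contraction; you simply spell out the paper's one-line surjectivity claim by taking $\fk$ and $\fp$ to be the spans of the $\epsilon^{0}$ and $\epsilon^{1}$ basis vectors and checking that existence of the limit forces $\fk$ to be a subalgebra with $T_\epsilon=T_\epsilon^{\fk,\fp}$.
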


\begin{proof}
By Lemma~\ref{lem:complements-jmp}, the isomorphism class of $\fg_{0}^{\fk,\fp}$ does not depend on the chosen complement $\fp$, so the assignment is well defined. Surjectivity is immediate from the construction of simple In\"on\"u-Wigner contractions. Injectivity is precisely the definition of the equivalence relation $\sim$.
\end{proof}

\section{Algebraic families of Lie algebras}\label{sec:families}
\subsection{Complex algebraic families}
An algebraic family of complex Lie algebras over a complex algebraic variety is, roughly speaking, a family of Lie algebras that varies algebraically with the parameter. For general definitions and examples we refer to Refs.~\cite{eyalbern,MR4123111}.

In this paper we work only over the complex affine line $\A^{1}_{\C}$. In that case the definition becomes especially concrete.

\begin{definition}
An \emph{algebraic family of complex Lie algebras over $\A^{1}_{\C}$} is a finite-rank free $\C[z]$-module $\fgfam$ equipped with the structure of a Lie algebra over the ring $\C[z]$.
\end{definition}

For $\alpha\in \C$, let $I_{\alpha}$ be the ideal $(z-\alpha)\C[z]$. By definition  \emph{the fiber of $\fgfam$ over $\alpha$} is the complex Lie algebra
\begin{equation}
\fgfam|_{\alpha}:=\fgfam/I_{\alpha}\fgfam,
\end{equation}
where we use the isomorphism $\C[z]/I_{\alpha}\cong \C$.

If $\{e_{1},\ldots,e_{n}\}$ is a basis of $\fgfam$ over $\C[z]$, so that $\fgfam=\bigoplus_{i=1}^{n}\C[z]e_{i}$, then the structure constants are polynomials $C_{ij}^{k}(z)\in \C[z]$ defined by
\begin{equation}
[e_{i},e_{j}]=\sum_{k=1}^{n}C_{ij}^{k}(z)e_{k}.
\end{equation}
After passing to the fiber at $\alpha$, the induced structure constants are the complex numbers $C_{ij}^{k}(\alpha)$.

\begin{example}
The $\C[z]$-module $\mathfrak{sl}_{3}(\C[z])$ of traceless $3\times 3$ matrices with entries in $\C[z]$ is an algebraic family of complex Lie algebras over $\A^{1}_{\C}$, with Lie bracket given by the commutator.

For each $\alpha\in \C$, evaluation at $z=\alpha$ defines an isomorphism
\begin{equation}
\mathfrak{sl}_{3}(\C[z])|_{\alpha}\cong \mathfrak{sl}_{3}(\C).
\end{equation}
Thus this is a constant family, isomorphic to $\C[z]\otimes_{\C}\mathfrak{sl}_{3}(\C)$.

Now consider the subfamily
\begin{equation}
\fgfam:=\left\{\begin{pmatrix}
0 & a(z) & b(z)\\
-a(z) & 0 & c(z)\\
zb(z) & zc(z) & 0
\end{pmatrix}\middle| a,b,c\in \C[z]\right\}.
\end{equation}
Then for each $\alpha\in \C$,
\begin{equation}
\fgfam|_{\alpha}\cong \left\{\begin{pmatrix}
0 & a & b\\
-a & 0 & c\\
\alpha b & \alpha c & 0
\end{pmatrix}\middle| a,b,c\in \C\right\}
\cong
\begin{cases}
\mathfrak{so}(3,\C), & \alpha\neq 0,\\
\mathfrak{so}(2,\C)\ltimes \C^{2}, & \alpha=0.
\end{cases}
\end{equation}
\end{example}

\subsection{Real algebraic families}
\begin{definition}
An \emph{algebraic family of real Lie algebras over $\A^{1}_{\R}$} is a finite-rank free $\R[z]$-module $\flfam$ equipped with the structure of a Lie algebra over $\R[z]$.
\end{definition}

For $\alpha\in \R$, let $I_{\alpha}(\R):=(z-\alpha)\R[z]$. The fiber of $\flfam$ over $\alpha$ is
\begin{equation}
\flfam|_{\alpha}:=\flfam/I_{\alpha}(\R)\flfam.
\end{equation}

Any algebraic family of complex Lie algebras over $\A^{1}_{\C}$ becomes, by restriction of scalars, an algebraic family of real Lie algebras over $\A^{1}_{\R}$.

Recall that an anti-holomorphic involution of a complex Lie algebra $\fg$ is an automorphism $\sigma$ of $\fg$ viewed as a real Lie algebra such that
\begin{equation}
\sigma(\alpha X)=\overline{\alpha}\,\sigma(X),
\qquad \forall \alpha\in \C,
\ X\in \fg,
\end{equation}
and $\sigma^{2}=\mathrm{id}$.

\begin{definition}\label{d3}
Let $\fgfam$ be an algebraic family of complex Lie algebras over $\A^{1}_{\C}$. An \emph{anti-holomorphic involution} of $\fgfam$ is a morphism of Lie algebras over $\R[z]$
\begin{equation}
\boldsymbol{\sigma}:\fgfam\longrightarrow \fgfam
\end{equation}
such that $\boldsymbol{\sigma}^{2}=\mathrm{id}$, and for all $X\in \fgfam$ and $f(z)\in \C[z]$,
 
\begin{equation}
\boldsymbol{\sigma}(f(z)X)=\overline{f}(z)\,\boldsymbol{\sigma}(X),
\end{equation}
 
where if $f(z)=\sum_{j}a_{j}z^{j}$ then $\overline{f}(z):=\sum_{j}\overline{a}_{j}z^{j}$.
\end{definition}

\begin{lemma}
Let $\boldsymbol{\sigma}:\fgfam\to \fgfam$ be an anti-holomorphic involution of an algebraic family of complex Lie algebras over $\A^{1}_{\C}$. Then
\begin{equation}
\fgfam^{\boldsymbol{\sigma}}:=\{X\in \fgfam: \boldsymbol{\sigma}(X)=X\}
\end{equation}
is an algebraic family of real Lie algebras over $\A^{1}_{\R}$.
\end{lemma}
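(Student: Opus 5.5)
We need to show that $\fgfam^{\boldsymbol{\sigma}}$ is an $\R[z]$-Lie subalgebra of $\fgfam$ and a free $\R[z]$-module of finite rank. The plan is first to check that it is closed under the relevant operations, and then to produce an explicit $\R[z]$-basis.

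For the first part, note that $\boldsymbol{\sigma}$ is a morphism of Lie algebras over $\R[z]$. For real polynomials $\overline{f}=f$, so $\boldsymbol{\sigma}$ is $\R[z]$-linear and preserves brackets. Hence the fixed-point set is an $\R[z]$-submodule closed under the bracket, and it inherits an $\R[z]$-Lie algebra structure.

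For freeness, the key observation is the decomposition
\[
\fgfam=\fgfam^{\boldsymbol{\sigma}}\oplus i\,\fgfam^{\boldsymbol{\sigma}}
\]
as $\R[z]$-modules. Each $X\in\fgfam$ can be written as
\[
X=\tfrac12\bigl(X+\boldsymbol{\sigma}(X)\bigr)+i\cdot\tfrac{1}{2i}\bigl(X-\boldsymbol{\sigma}(X)\bigr),
\]
and both $\tfrac12(X+\boldsymbol{\sigma}X)$ and $\tfrac1{2i}(X-\boldsymbol{\sigma}X)$ are fixed by $\boldsymbol{\sigma}$; this uses $\boldsymbol{\sigma}(iY)=-i\boldsymbol{\sigma}(Y)$. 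The sum is direct because any element in the intersection satisfies $Y=\boldsymbol{\sigma}Y=-Y$.

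So $\fgfam^{\boldsymbol{\sigma}}$ is a direct summand of $\fgfam$ regarded as an $\R[z]$-module, via restriction of scalars. That module is free of rank $2n$ over $\R[z]$, with basis $e_j, ie_j$. A direct summand of a finitely generated free module is finitely generated and projective. Since $\R[z]$ is a PID, a finitely generated projective module, or indeed any submodule of a finitely generated free module, is free of finite rank. This gives freeness.

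We can also pin down the rank. Multiplication by $i$ is an $\R[z]$-isomorphism from $\fgfam^{\boldsymbol{\sigma}}$ to $i\fgfam^{\boldsymbol{\sigma}}$, so both summands have equal rank, and therefore the rank is $n$. The same decomposition gives
\[
\fgfam^{\boldsymbol{\sigma}}\otimes_{\R[z]}\C[z]\cong\fgfam,
\]
which is convenient later when identifying fibers.

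The main point needing care is freeness. It is not automatic for a fixed-point submodule, and it is exactly where we use that $\R[z]$ is a PID (submodules of free modules are free) together with the splitting above.
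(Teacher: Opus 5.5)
Your proof is correct and follows the paper's argument: the fixed points are closed under the bracket and under $\R[z]$-scaling, and freeness holds because $\R[z]$ is a PID, so submodules of finitely generated free modules are free of finite rank. Your splitting $\fgfam=\fgfam^{\boldsymbol{\sigma}}\oplus i\,\fgfam^{\boldsymbol{\sigma}}$ is not needed for freeness, but it adds something the paper's one-line proof does not record: it shows the rank is $n$ and gives $\fgfam^{\boldsymbol{\sigma}}\otimes_{\R[z]}\C[z]\cong\fgfam$.
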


\begin{proof}
The fixed-point set $\fgfam^{\boldsymbol{\sigma}}$ is stable under the Lie bracket and under
multiplication by $\R[z]$, hence it is an $\R[z]$-Lie subalgebra of $\fgfam$.
Since $\g$ is free as an $\R[z]$-module and $\R[z]$ is a principal ideal
domain, every $\R[z]$-submodule of $\g$ is free. Therefore $\fgfam^{\boldsymbol{\sigma}}$
is a free $\R[z]$-module. 
\end{proof}

\begin{example}
For the family
\begin{equation}
\fgfam:=\left\{\begin{pmatrix}
0 & a(z) & b(z)\\
-a(z) & 0 & c(z)\\
zb(z) & zc(z) & 0
\end{pmatrix}\middle| a,b,c\in \C[z]\right\},
\end{equation}
define $\boldsymbol{\sigma}:\fgfam\to \fgfam$ by entrywise complex conjugation. Then, for $\alpha\in \R$,
\begin{equation}
\fgfam^{\boldsymbol{\sigma}}|_{\alpha}
\cong
\left\{\begin{pmatrix}
0 & a & b\\
-a & 0 & c\\
\alpha b & \alpha c & 0
\end{pmatrix}\middle| a,b,c\in \R\right\}
\cong
\begin{cases}
\mathfrak{so}(3), & \alpha<0,\\
\mathfrak{so}(2)\ltimes \R^{2}, & \alpha=0,\\
\mathfrak{so}(2,1), & \alpha>0.
\end{cases}
\end{equation}
\end{example}

\section{The dual contraction}\label{sec:dual-contraction}
\subsection{Symmetric Lie algebras}
A \emph{symmetric Lie algebra} is a pair $(\fg,\theta)$ in which $\fg$ is a Lie algebra and $\theta$ is an involutive automorphism of $\fg$. Then $\fg$ is naturally $\Z_{2}$-graded:
\begin{equation}
\fg=\fg^{\theta}\oplus \fg^{-\theta},
\end{equation}
where
\begin{equation}
\fg^{\pm\theta}:=\{X\in \fg: \theta(X)=\pm X\}.
\end{equation}
The bracket respects this grading:
\begin{subequations}
\begin{align}
[\fg^{\theta},\fg^{\theta}]&\subseteq \fg^{\theta},\\
[\fg^{\theta},\fg^{-\theta}]&\subseteq \fg^{-\theta},\\
[\fg^{-\theta},\fg^{-\theta}]&\subseteq \fg^{\theta}.
\end{align}
\end{subequations}
The fixed-point subalgebra $\fg^{\theta}$ is called a \emph{symmetric subalgebra}.

\begin{example}
Let $p,q,d\in \N_{0}$ with $d>0$, and define
\begin{equation}
J_{p,q}:=\diag(\underbrace{1,\ldots,1}_{p\text{ times}},\underbrace{-1,\ldots,-1}_{q\text{ times}}),
\end{equation}
\begin{equation}
J_{p,d,q}:=\diag(\underbrace{1,\ldots,1}_{p\text{ times}},\underbrace{-1,\ldots,-1}_{d\text{ times}},\underbrace{1,\ldots,1}_{q\text{ times}}).
\end{equation}
Then
\begin{equation}
\fg=\mathfrak{so}(p+d,q):=\{X\in \mathfrak{gl}_{p+d+q}(\R): X^{t}J_{p+d,q}=-J_{p+d,q}X\}
\end{equation}
together with the involution $\theta_{p,d,q}:=\Ad(J_{p,d,q})$ is a real symmetric Lie algebra.
\end{example}

\subsection{The dual symmetric Lie algebra}\label{subsec:dual-symmetric}
Let $(\fg,\theta)$ be a finite-dimensional real symmetric Lie algebra. Extend $\theta$ complex linearly to an involution $\widetilde{\theta}$ of the complexification
\begin{equation}
\fg(\C):=\C\otimes_{\R}\fg.
\end{equation}
We identify $\fg$ with its natural image in $\fg(\C)$. The real form $\fg\subseteq \fg(\C)$ determines an anti-holomorphic involution
\begin{equation}
\sigma:\fg(\C)\longrightarrow \fg(\C)
\end{equation}
that acts as the identity on $\fg$.

\begin{lemma}\label{lem:dual-symmetric-jmp}
Let $(\fg,\theta)$ be a finite-dimensional real symmetric Lie algebra. Define the anti-holomorphic involution
\begin{equation}
\sigma^{*}:=\sigma\widetilde{\theta}
\end{equation}
of $\fg(\C)$, and let
\begin{equation}
\fg^{*}:=\fg(\C)^{\sigma^{*}}
\end{equation}
be the corresponding real form. Then:
\begin{enumerate}
    \item The involutions $\widetilde{\theta}$, $\sigma$, and $\sigma^{*}$ commute.
    \item The pair $(\fg^{*},\widetilde{\theta}|_{\fg^{*}})$ is a real symmetric Lie algebra.
    \item The eigenspace decomposition of $\fg^{*}$ with respect to $\widetilde{\theta}|_{\fg^{*}}$ is
    \begin{equation}
    \fg^{*}=(\R\otimes_{\R}\fg^{\theta})\oplus (i\R\otimes_{\R}\fg^{-\theta}).
    \end{equation}
    \item If the same construction is applied to $(\fg^{*},\widetilde{\theta}|_{\fg^{*}})$, then one recovers $(\fg,\theta)$.
\end{enumerate}
\end{lemma}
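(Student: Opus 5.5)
The plan is to work entirely inside $\fg(\C)=\fg\oplus i\fg$, using the formulas $\sigma(X+iY)=X-iY$ and $\widetilde{\theta}(X+iY)=\theta(X)+i\theta(Y)$ for $X,Y\in\fg$. First check that $\sigma^{*}$ is a genuine anti-holomorphic involution. It is the composition of a complex-linear automorphism with an anti-linear automorphism of the real Lie algebra $\fg(\C)$, so it is an anti-linear automorphism. That it squares to the identity will follow from item 1.

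For item 1, compare $\sigma\widetilde{\theta}$ and $\widetilde{\theta}\sigma$ on $X+iY$. Both give $\theta(X)-i\theta(Y)$, because $\theta$ is real on $\fg$. Hence $\sigma$ and $\widetilde{\theta}$ commute. Since $\sigma^{*}$ is their product, it commutes with each of them, and $(\sigma^{*})^{2}=\sigma^{2}\widetilde{\theta}^{2}=\mathrm{id}$.

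For item 2, the commutation $\widetilde{\theta}\sigma^{*}=\sigma^{*}\widetilde{\theta}$ shows that $\widetilde{\theta}$ preserves the fixed-point set $\fg^{*}$. Its restriction to $\fg^{*}$ is then a real Lie algebra automorphism of order dividing two, so $(\fg^{*},\widetilde{\theta}|_{\fg^{*}})$ is a real symmetric Lie algebra.

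For item 3, decompose $\fg(\C)=\fg^{\theta}\oplus i\fg^{\theta}\oplus\fg^{-\theta}\oplus i\fg^{-\theta}$ as a real vector space. On these four summands $\sigma^{*}$ acts by $+1,-1,-1,+1$ respectively. Therefore $\fg^{*}=\fg^{\theta}\oplus i\fg^{-\theta}$. On $\fg^{\theta}$ the map $\widetilde{\theta}$ acts by $+1$, and on $i\fg^{-\theta}$ it acts by $-1$, which gives exactly the stated eigenspace decomposition.

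Item 4 is the only step needing care, because we must identify the complexification of $\fg^{*}$ with $\fg(\C)$ so that the construction can be rerun. The inclusion $\fg^{*}\hookrightarrow\fg(\C)$ extends to a complex-linear map $\C\otimes_{\R}\fg^{*}\to\fg(\C)$. This map is an isomorphism, since $\fg^{*}$ is a real form: it is the fixed-point set of an anti-linear involution, so $\fg(\C)=\fg^{*}\oplus i\fg^{*}$. Under this identification:
\begin{itemize}
\item the conjugation of $\fg(\C)$ fixing $\fg^{*}$ is $\sigma^{*}$;
\item the complex-linear extension of $\widetilde{\theta}|_{\fg^{*}}$ is $\widetilde{\theta}$ itself, since both are complex linear and agree on a real form.
\end{itemize}
Hence the new anti-holomorphic involution is $\sigma^{*}\widetilde{\theta}=\sigma\widetilde{\theta}^{2}=\sigma$. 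Its fixed points are $\fg$, and the restricted involution is $\widetilde{\theta}|_{\fg}=\theta$. Equivalently, by item 3 applied twice, the result is $\fg^{\theta}\oplus i(i\fg^{-\theta})=\fg^{\theta}\oplus\fg^{-\theta}=\fg$. This recovers $(\fg,\theta)$.
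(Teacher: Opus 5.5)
Your proposal is correct and follows the paper's proof essentially step for step: you show that $\sigma$ and $\widetilde{\theta}$ commute, hence $\fg^{*}$ is $\widetilde{\theta}$-stable, and you read off $\fg^{*}=\fg^{\theta}\oplus i\fg^{-\theta}$ from the signs of $\sigma$ on the $\widetilde{\theta}$-eigenspaces. For item~4 you are more careful than the paper, which only says the construction replaces $i\fg^{-\theta}$ by $\fg^{-\theta}$; your explicit identification $\C\otimes_{\R}\fg^{*}\cong\fg(\C)$, together with the computation $\sigma^{*}\widetilde{\theta}=\sigma$, makes that one-line argument precise.
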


\begin{proof}
Since $\widetilde{\theta}$  
is the complexification of  the $\R$-linear involution $\theta:\fg\longrightarrow \fg$, it preserves the real form $\fg\subset \fg(\C)$, and therefore commutes with $\sigma$. 
 Hence $\sigma^{*}=\sigma\widetilde{\theta}$ is again an anti-holomorphic involution and all three involutions commute. The fixed-point space $\fg^{*}=\fg(\C)^{\sigma^{*}}$ is therefore stable under $\widetilde{\theta}$, proving items~(1) and~(2).

Write $X\in \fg(\C)$ as $X=X_{+}+X_{-}$ with $X_{\pm}\in \fg(\C)^{\pm\widetilde{\theta}}$. Then $X\in \fg^{*}$ if and only if
\begin{equation}
\sigma(X_{+})=X_{+}
\qquad \text{and} \qquad
\sigma(X_{-})=-X_{-}.
\end{equation}
Equivalently, $X_{+}\in \fg^{\theta}$ and $X_{-}\in i\fg^{-\theta}$. This proves item~(3). Finally, applying the same construction once more replaces $i\fg^{-\theta}$ by $\fg^{-\theta}$ and therefore recovers the original real form, proving item~(4).
\end{proof}

\begin{definition}[Compare with Sec. 4 of ~\cite{FLENSTEDJENSEN1978106}]
The symmetric Lie algebra $(\fg^{*},\widetilde{\theta}|_{\fg^{*}})$ of Lemma~\ref{lem:dual-symmetric-jmp} is called the \emph{dual symmetric Lie algebra} of $(\fg,\theta)$.
\end{definition}

We remark that in the semisimple setting, this duality  appeared already in the  work of
Flensted-Jensen ~\cite{FLENSTEDJENSEN1978106}, and  is also closely related
to what is often called Cartan duality (or c-duality); see, for example, ~\cite{MR1834454,BABA2021101751,MR3513873,MorinelliNeebOlafsson2024}.

\begin{example}
For the symmetric Lie algebra $(\mathfrak{so}(p+d,q),\theta_{p,d,q})$, the dual symmetric Lie algebra is isomorphic to $\mathfrak{so}(p,d+q)$. More explicitly,
\begin{subequations}
\begin{align}
\mathfrak{so}(p+d,q)^{\theta_{p,d,q}}
&=
\left\{\begin{pmatrix}
X_{11}&0&X_{13}\\
0&X_{22}&0\\
X_{13}^{t}&0&X_{33}
\end{pmatrix}\middle|
\begin{array}{l}
X_{11}\in \mathfrak{gl}_{p}(\R), X_{22} \in \mathfrak{gl}_{d}(\R)\\
X_{33}\in \mathfrak{gl}_{q}(\R), X_{13}\in M_{p\times q}(\R),\\
 X_{ii}=-X_{ii}^t
\end{array}
\right\}\\
\mathfrak{so}(p+d,q)^{-\theta_{p,d,q}}
&=
\left\{\begin{pmatrix}
0&X_{12}&0\\
-X_{12}^{t}&0&X_{23}\\
0&X_{23}^{t}&0
\end{pmatrix}\middle| \begin{array}{l} X_{12}\in M_{p\times d}(\R),\\
 X_{23}\in M_{d\times q}(\R) 
 \end{array}
\right\}.
\end{align}
\end{subequations}
An isomorphism
\begin{equation}
\mathfrak{so}(p+d,q)^{*}\longrightarrow \mathfrak{so}(p,d+q)
\end{equation}
is obtained by the natural scalar multiplication in the complexification followed by conjugation by the diagonal matrix
\begin{equation}
J_{p,d,q}^{1/2}:=\diag(\underbrace{1,\ldots,1}_{p\text{ times}},\underbrace{i,\ldots,i}_{d\text{ times}},\underbrace{1,\ldots,1}_{q\text{ times}}).
\end{equation}
Explicitly, it sends $\alpha\otimes X\in \mathfrak{so}(p+d,q)^{*}$ to $\alpha\,\Ad(J_{p,d,q}^{1/2})(X)$.
\end{example}

\subsection{The dual contraction}
Let $(\fg,\theta)$ be a finite-dimensional real symmetric Lie algebra. The In\"on\"u-Wigner contraction of $\fg$ with respect to $\fg^{\theta}$ is realized by the family of linear maps
\begin{equation}
T_{\epsilon}^{\fg,\theta}(X+Y)=X+\epsilon Y,
\qquad X\in \fg^{\theta},\ Y\in \fg^{-\theta},\ \epsilon>0.
\end{equation}
Its limit is the Lie algebra
\begin{equation}
\fg^{\theta}\ltimes \fg^{-\theta}.
\end{equation}

\begin{definition}
Let $(\fg,\theta)$ be a finite-dimensional real symmetric Lie algebra. The \emph{dual contraction} of the In\"on\"u-Wigner contraction of $\fg$ with respect to $\fg^{\theta}$ is the In\"on\"u-Wigner contraction of the dual symmetric Lie algebra $\fg^{*}$ with respect to the common subalgebra $\R\otimes_{\R}\fg^{\theta}$. It is realized by the maps
\begin{equation}
T_{\epsilon}^{\fg^{*},\theta}(r_{1}\otimes X+ir_{2}\otimes Y)=r_{1}\otimes X+i\epsilon r_{2}\otimes Y,
\end{equation}
for $X\in \fg^{\theta}$, $Y\in \fg^{-\theta}$, $r_{1},r_{2}\in \R$, and $\epsilon>0$.
\end{definition}
With this notation, the limiting Lie algebra of the dual contraction is
\begin{equation}
\R\otimes_{\R}\fg^{\theta}\ltimes (i\R\otimes_{\R}\fg^{-\theta}) \cong  \fg^{\theta}\ltimes \fg^{-\theta}.
\end{equation}
\section{From dual contractions to algebraic families}\label{sec:main}
The following theorem makes precise the relation between a symmetric Lie algebra, its In\"on\"u-Wigner contraction, and the corresponding dual contraction.

\begin{theorem}\label{thm:main-jmp}
Let $(\fg,\theta)$ be a finite-dimensional real symmetric Lie algebra. Then there exist an algebraic family of complex Lie algebras $\fgfam$ over $\A^{1}_{\C}$ and an anti-holomorphic involution $\boldsymbol{\sigma}:\fgfam\to \fgfam$ such that, for every $\alpha\in \R$,
\begin{equation}\label{57}
\fgfam^{\boldsymbol{\sigma}}|_{\alpha}\cong
\begin{cases}
\fg^{*}, & \alpha<0,\\
\fg^{\theta}\ltimes \fg^{-\theta}, & \alpha=0,\\
\fg, & \alpha>0.
\end{cases}
\end{equation}
\end{theorem}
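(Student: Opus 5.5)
Take as the underlying module the constant free module $\C[z]\otimes_{\R}\fg$. Using the grading $\fg=\fg^{\theta}\oplus\fg^{-\theta}$, let $\fgfam$ be this module with the bracket, for $X,X'\in\fg^{\theta}$ and $Y,Y'\in\fg^{-\theta}$ (extended $\C[z]$-bilinearly),
\begin{equation}
[X,X']_{z}=[X,X'],\qquad [X,Y]_{z}=[X,Y],\qquad [Y,Y']_{z}=z\,[Y,Y'].
\end{equation}
This is the family that $T_\epsilon$ produces with $z=\epsilon^{2}$, but defined for all $z$. Define $\boldsymbol{\sigma}$ as the identity on $\fg$, extended to $\C[z]\otimes_{\R}\fg$ by $f(z)\otimes X\mapsto \overline{f}(z)\otimes X$.

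\textbf{Step 1: $\fgfam$ is a Lie algebra over $\C[z]$.} Bilinearity and antisymmetry are clear. For the Jacobi identity, check it on homogeneous triples sorted by how many entries lie in $\fg^{-\theta}$. With $k$ entries in $\fg^{-\theta}$, every term of the Jacobi sum carries the same factor: $1$ for $k\le 1$, $z$ for $k=2$, and $z$ for $k=3$ (only the outer bracket of $[Y,[Y',Y'']]$ lands in $\fg^{-\theta}\times\fg^{\theta}$, which is unscaled). So the Jacobi identity reduces to the one in $\fg$. This uses the grading rules for $[\fg^{\pm\theta},\fg^{\pm\theta}]$. Since $\boldsymbol{\sigma}$ is $\R[z]$-linear, antilinear as required, and preserves brackets because the structure constants lie in $\R[z]$, it is an anti-holomorphic involution in the sense of Definition~\ref{d3}. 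Its fixed points are $\R[z]\otimes_{\R}\fg$ with the same bracket.

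\textbf{Step 2: compute the real fibers.} For $\alpha\in\R$, the fiber $\fgfam^{\boldsymbol{\sigma}}|_{\alpha}$ is $\fg$ with bracket $[\cdot,\cdot]_{\alpha}$.
\begin{itemize}
\item If $\alpha=0$, then $\fg^{-\theta}$ is an abelian ideal on which $\fg^{\theta}$ acts by the adjoint action, so the fiber is exactly $\fg^{\theta}\ltimes\fg^{-\theta}$.
\item If $\alpha>0$, set $s=\sqrt{\alpha}$ and let $\phi(X+Y)=X+sY$. Then $\phi[Y,Y']_{\alpha}=\alpha[Y,Y']=[sY,sY']$, and the mixed and $\fg^{\theta}$ brackets are preserved trivially. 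Hence $\phi$ is an isomorphism onto $\fg$.
\item If $\alpha<0$, set $s=\sqrt{-\alpha}$ and let $\phi(X+Y)=1\otimes X+is\otimes Y\in\fg^{*}$. This is a real linear bijection onto $\fg^{*}$ by Lemma~\ref{lem:dual-symmetric-jmp}(3). The key check is
\begin{equation}
[is Y,isY']=-s^{2}[Y,Y']=\alpha[Y,Y']=\phi([Y,Y']_{\alpha}),
\end{equation}
together with $[X,isY]=is[X,Y]$.
\end{itemize}

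\textbf{Main obstacle.} The only genuinely delicate point is the sign bookkeeping in the case $\alpha<0$, namely that the factor $i^{2}=-1$ exactly reproduces negative $\alpha$. Everything else is routine. This also shows that the identification at $\alpha=0$ matches the limits of both $T_\epsilon^{\fg,\theta}$ and $T_\epsilon^{\fg^{*},\theta}$, consistent with the text. For freeness over $\R[z]$ nothing extra is needed, since the fixed-point family is visibly $\R[z]\otimes\fg$.
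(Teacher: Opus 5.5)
Your proof is correct and complete, and it reaches the result by a more elementary route than the paper. The paper proves Theorem~\ref{thm:main-jmp} by invoking Thm.~3.1 of Ref.~\cite{MR3797197}. Its accompanying explicit realization uses Ado's theorem and Lemma~\ref{lem:real-form-embedding-jmp} to embed $\fg(\C)\subseteq\mathfrak{gl}_{n}(\C)$ with $\sigma$ acting by entrywise conjugation. It then takes $\fgfam$ to be the $\C[z]$-span in $\mathfrak{gl}_{2n}(\C[z])$ of the matrices $\left(\begin{smallmatrix} X_+ & X_-\\ zX_- & X_+\end{smallmatrix}\right)$, with $\boldsymbol{\sigma}$ given by conjugating coefficients, and asserts the three fiber identifications.

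Your family is the same one, written intrinsically. The commutator of two such matrices has diagonal blocks $[X_+,X_+']+z[X_-,X_-']$ and off-diagonal blocks $W$ and $zW$ with $W=[X_+,X_-']+[X_-,X_+']$. Hence $f\otimes(X+Y)\mapsto f\left(\begin{smallmatrix} X & Y\\ zY & X\end{smallmatrix}\right)$ identifies your $(\fgfam,\boldsymbol{\sigma})$ with the paper's.

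Defining the bracket directly by the graded rescaling $[Y,Y']_{z}=z[Y,Y']$ gains you three things:
\begin{enumerate}
\item It avoids Ado's theorem, the conjugation-compatible embedding, and the external citation.
\item It makes $\fgfam^{\boldsymbol{\sigma}}=\R[z]\otimes_{\R}\fg$, and hence freeness, obvious.
\item Your explicit isomorphisms $X+Y\mapsto X+\sqrt{\alpha}\,Y$ and $X+Y\mapsto X+i\sqrt{-\alpha}\,Y$ supply the fiber verification the paper leaves implicit, and show exactly how the sign of $\alpha$ selects $\fg$ or $\fg^{*}$.
\end{enumerate}
The matrix model, in turn, places $\fgfam$ inside the constant family $\mathfrak{gl}_{2n}(\C[z])$ with $\boldsymbol{\sigma}$ realized as literal complex conjugation. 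That is the form suited to the group-level constructions (algebraic families of groups and Harish-Chandra pairs) in the cited work.
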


\begin{proof}
Let $\sigma$ be the anti-holomorphic involution of $\fg(\C)$ associated with the real form $\fg$, and let $\sigma^{*}=\sigma\widetilde{\theta}$ be the anti-holomorphic involution associated with the dual real form $\fg^{*}$. By Lemma~\ref{lem:dual-symmetric-jmp}, the involutions $\widetilde{\theta}$, $\sigma$, and $\sigma^{*}$ commute. The existence of $\g$ with its anti-holomorphic involution $\boldsymbol{\sigma}$ satisfying Eq.  (\ref{57}) follows  from Thm. 3.1. of  Ref.~\cite{MR3797197}.
\end{proof}

In other words, every real symmetric Lie algebra determines a single algebraic family of real Lie algebras that contains both dual real forms and their common In\"on\"u-Wigner contraction.

\subsection{Explicit realization of the family}\label{subsec:explicit}
We now recall the explicit realization of the family $\fgfam$ used in the proof of Theorem~\ref{thm:main-jmp}, following Ref.~\cite{MR3797197}. By Ado's theorem there is an integer $n\in \N$ and an embedding $\fg(\C)\hookrightarrow \mathfrak{gl}_{n}(\C)$. By Lemma~\ref{lem:real-form-embedding-jmp}, after replacing this embedding if necessary, we may assume that $\sigma$ is given by entrywise complex conjugation on matrices.

Define
\begin{equation}
\iota:\fg(\C)\longrightarrow \mathfrak{gl}_{2n}(\C),
\qquad
\iota(X)=\frac{1}{2}
\begin{pmatrix}
X+\widetilde{\theta}(X) & X-\widetilde{\theta}(X)\\
X-\widetilde{\theta}(X) & X+\widetilde{\theta}(X)
\end{pmatrix}.
\end{equation}
Then $\iota$ is an embedding of complex Lie algebras satisfying, for every $X\in \fg(\C)$,
\begin{enumerate}
    \item $\iota(\sigma(X))=\overline{\iota(X)}$,
    \item $\iota(\widetilde{\theta}(X))=J_{n,n}\,\iota(X)\,J_{n,n}$,
    \item $\iota(\sigma^{*}(X))=J_{n,n}\,\overline{\iota(X)}\,J_{n,n}$,
\end{enumerate}
where $J_{n,n}=\diag(I_{n},-I_{n})$.

In this realization,
\begin{equation}
\iota(\fg(\C))=
\left\{
\begin{pmatrix}
X_+ & X_-\\
X_- & X_+
\end{pmatrix}\middle|
X_{\pm}\in \fg(\C)^{\pm\widetilde{\theta}}
\right\}.
\end{equation}
The family $\fgfam$ is the $\C[z]$-submodule of the constant family $\mathfrak{gl}_{2n}(\C[z])$ spanned by
\begin{equation}
\left\{
\begin{pmatrix}
X_+ & X_-\\
zX_- & X_+
\end{pmatrix}\middle|
X_{\pm}\in \fg(\C)^{\pm\widetilde{\theta}}
\right\}.
\end{equation}
The anti-holomorphic involution $\boldsymbol{\sigma}:\fgfam\to \fgfam$ is given by entrywise complex conjugation on coefficients:
\begin{equation}
\boldsymbol{\sigma}(A)_{ij}=\overline{A_{ij}},
\end{equation}
where complex conjugation on $\C[z]$ was defined in Section \ref{sec:families}.
Therefore, for $\alpha\in \R$, the fiber $\fgfam^{\boldsymbol{\sigma}}|_{\alpha}$ is isomorphic to 
\begin{equation}
\left\{
\begin{pmatrix}
X_+ & X_-\\
\alpha X_- & X_=
\end{pmatrix}\middle|
X_{\pm}\in \fg(\C)^{\pm\widetilde{\theta}}\cap \fg(\C)^{\sigma}
\right\}
\cong
\begin{cases}
\fg^{*}, & \alpha<0,\\
\fg^{\theta}\ltimes \fg^{-\theta}, & \alpha=0,\\
\fg, & \alpha>0.
\end{cases}
\end{equation}

\begin{example}
For the symmetric Lie algebra $(\mathfrak{so}(p+d,q),\theta_{p,d,q})$, the fiber $\fgfam^{\boldsymbol{\sigma}}|_{\alpha}$ can be written explicitly as
\begin{equation}
\left\{\begin{pNiceArray}{ccc|ccc}
X_{11}& 0 & X_{13}  &0& X_{12} & 0\\
0& X_{22} & 0 & -X_{12}^{t}& 0& X_{23}\\
X_{13}^{t}& 0 & X_{33}&0& X_{23}^{t} &0\\
\hline
0& \alpha X_{12} & 0& X_{11}& 0 & X_{13}\\
-\alpha X_{12}^{t}& 0& \alpha X_{23} & 0& X_{22} & 0\\
0& \alpha X_{23}^{t} &0 & X_{13}^{t}& 0 & X_{33}
\end{pNiceArray}\middle|
\begin{array}{l}
X_{11}\in \mathfrak{gl}_{p}(\R), X_{22}\in \mathfrak{gl}_{d}(\R),\\
 X_{33}\in \mathfrak{gl}_{q}(\R), X_{13}\in M_{p\times q}(\R)\\
X_{12}\in M_{p\times d}(\R),\\   X_{23}\in M_{d\times q}(\R)\\
X_{ii}^{t}=-X_{ii}
\end{array}
\right\}.
\end{equation}
For $\alpha>0$ this fiber is isomorphic to $\mathfrak{so}(p+d,q)$, for $\alpha=0$ it is the contraction $(\mathfrak{so}(p,q)\oplus \mathfrak{so}(d))\ltimes (M_{p\times d}(\R)\oplus M_{d\times q}(\R))$, and for $\alpha<0$ it is isomorphic to $\mathfrak{so}(p,d+q)$.
\end{example}

\section{Discussion}
We introduced the notion of a dual contraction for the In\"on\"u-Wigner contraction of a real Lie algebra with respect to a symmetric subalgebra. The main theorem shows that the original contraction and its dual occur as real fibers of a single algebraic family. In this way, the two contractions are not merely related by analogy: they are part of one algebraic and analytic object.

This point of view suggests that information can sometimes be transferred from one side of the family to the other. For instance, in the hydrogen-atom setting studied in Refs.~\cite{MR3827131,MR4460278}, the relevant family is precisely the one attached to the dual contractions
\begin{equation}
\mathfrak{so}(n+1)\longrightarrow \mathfrak{so}(n)\ltimes \R^{n}
\qquad \text{and} \qquad
\mathfrak{so}(n,1)\longrightarrow \mathfrak{so}(n)\ltimes \R^{n}.
\end{equation}
Those works show that algebraic families of Harish-Chandra modules that arise as algebraic solutions for the Schr\"odinger  equation    encode spectral information for the Schr\"odinger operator.
Moreover, by analyticity, it is enough to know the algebraic solutions for positive energies in order to determine the complete spectrum.

\section{Appendix}\label{sec:appendix}

\begin{proposition}\label{prop:equivalent-subalgebras-jmp}
Let $\fk,\fk'\in \Sub(\fg)$. Suppose there is a vector-space isomorphism $\nu:\fg\to \fg$ whose restriction $\nu|_{\fk}:\fk\to \fk'$ is an isomorphism of Lie algebras and such that
\begin{equation}
\nu([X,Y])-[\nu(X),\nu(Y)]\in \fk',
\qquad \forall X\in \fk,\ Y\in \fg.
\end{equation}
Then $\fk\sim \fk'$.
\end{proposition}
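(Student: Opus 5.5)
The natural approach is to build an explicit Lie algebra isomorphism
\[
\Phi:\fk\ltimes(\fg/\fk)\longrightarrow \fk'\ltimes(\fg/\fk')
\]
out of $\nu$, mirroring the proof of Lemma~\ref{lem:complements-jmp}. In both semidirect products the first factor is a subalgebra, the quotient is an abelian ideal, and the mixed bracket is $[X,Y+\fk]=[X,Y]+\fk$ for $X\in\fk$, $Y\in\fg$. I would define
\[
\Phi(X,\,Y+\fk):=\bigl(\nu(X),\,\nu(Y)+\fk'\bigr),\qquad X\in\fk,\ Y\in\fg.
\]

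The first step is to check that $\Phi$ is well defined and bijective. Since $\nu(\fk)=\fk'$, the map $\nu$ sends $\fk$ into $\fk'$. Hence it descends to a linear map $\bar\nu:\fg/\fk\to\fg/\fk'$. Because $\nu$ is a linear isomorphism of $\fg$ carrying $\fk$ onto $\fk'$, the map $\bar\nu$ is a linear isomorphism, with inverse induced by $\nu^{-1}$. Combined with the hypothesis that $\nu|_{\fk}:\fk\to\fk'$ is a bijection, this shows $\Phi=\nu|_{\fk}\times\bar\nu$ is a linear isomorphism.

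The second step is to check that $\Phi$ preserves brackets, one type at a time.
\begin{itemize}
\item On pairs in $\fk$, this is exactly the hypothesis that $\nu|_{\fk}$ is a Lie algebra homomorphism.
\item On pairs in the quotient, both brackets vanish because the quotients are abelian.
\item For mixed pairs $X\in\fk$, $Y\in\fg$, we need
\[
\nu([X,Y])+\fk'=[\nu(X),\nu(Y)]+\fk'.
\]
This is precisely the displayed hypothesis $\nu([X,Y])-[\nu(X),\nu(Y)]\in\fk'$.
\end{itemize}
By bilinearity and antisymmetry, these three cases cover all brackets. Therefore $\Phi$ is an isomorphism and $\fk\sim\fk'$.

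I do not expect a genuine obstacle. The only point requiring care is well-definedness and invertibility of $\bar\nu$. This uses $\nu(\fk)=\fk'$ exactly, not merely $\nu(\fk)\subseteq\fk'$, and that is guaranteed by the hypothesis that $\nu|_{\fk}$ is an isomorphism onto $\fk'$.
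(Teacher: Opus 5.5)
Your proposal is correct and follows the paper's proof essentially verbatim. The paper defines the same map $\widetilde{\nu}(X,Y+\fk)=(\nu(X),\nu(Y)+\fk')$ from $\fk\ltimes(\fg/\fk)$ to $\fk'\ltimes(\fg/\fk')$ and checks the three types of brackets, with the mixed case reducing exactly to the displayed hypothesis; your explicit argument that $\nu$ descends to an isomorphism $\fg/\fk\to\fg/\fk'$ (using $\nu(\fk)=\fk'$) fills in the step the paper calls ``clear.''
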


\begin{proof}
By Lemma~\ref{lem:complements-jmp}, both contractions associated with $\fk$ and $\fk'$ can be written canonically as semidirect products with quotients:
\begin{equation}
\fg_{0}^{\fk,\fp}\cong \fk\ltimes (\fg/\fk),
\qquad
\fg_{0}^{\fk',\fp'}\cong \fk'\ltimes (\fg/\fk').
\end{equation}
Hence it suffices to construct an isomorphism
\begin{equation}
\widetilde{\nu}:\fk\ltimes (\fg/\fk)\longrightarrow \fk'\ltimes (\fg/\fk').
\end{equation}
Define
\begin{equation}
\widetilde{\nu}(X,Y+\fk)=(\nu(X),\nu(Y)+\fk'),
\qquad X\in \fk,\ Y\in \fg.
\end{equation}
This is clearly a vector-space isomorphism. It preserves the brackets on the subalgebra part and on the abelian ideal part, so it remains only to check mixed brackets. For $X\in \fk$ and $Y\in \fg$,
\begin{align}
\widetilde{\nu}([(X,0+\fk),(0,Y+\fk)])
&=\widetilde{\nu}(0,[X,Y]+\fk)\nonumber\\
&=(0,\nu([X,Y])+\fk').
\end{align}
By the hypothesis,
\begin{equation}
\nu([X,Y])+\fk'=[\nu(X),\nu(Y)]+\fk',
\end{equation}
and therefore
\begin{equation}
\widetilde{\nu}([(X,0+\fk),(0,Y+\fk)])
=[(\nu(X),0+\fk'),(0,\nu(Y)+\fk')]
=[\widetilde{\nu}(X,0+\fk),\widetilde{\nu}(0,Y+\fk)].
\end{equation}
Hence $\widetilde{\nu}$ is an isomorphism of Lie algebras, so $\fk\sim \fk'$.
\end{proof}

\begin{lemma}\label{lem:real-form-embedding-jmp}
Let $\fl$ be a finite-dimensional complex Lie algebra and let $\sigma:\fl\to \fl$ be an anti-holomorphic involution. Then there exist an integer $m\in \N$ and an embedding
\begin{equation}
\phi:\fl\longrightarrow \mathfrak{gl}_{2m}(\C)
\end{equation}
of complex Lie algebras such that
\begin{equation}
\phi(\sigma(X))=\overline{\phi(X)},
\qquad X\in \fl.
\end{equation}
\end{lemma}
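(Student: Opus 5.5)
The plan is to use Ado's theorem together with a "realification" trick, so that $\sigma$ becomes entrywise conjugation after doubling the size. View $\fl$ as a real Lie algebra, namely its real form $\fl^{\sigma}$. Then $\fl\cong \C\otimes_{\R}\fl^{\sigma}$, and under this identification $\sigma$ is $\overline{\,\cdot\,}\otimes \mathrm{id}$. This is the standard fact that an anti-holomorphic involution is the conjugation of its fixed real form: $\fl=\fl^{\sigma}\oplus i\fl^{\sigma}$, since $X=\tfrac12(X+\sigma X)+i\cdot\tfrac{1}{2i}(X-\sigma X)$.

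Next apply Ado's theorem to the finite-dimensional real Lie algebra $\fl^{\sigma}$. This gives a faithful representation $\rho_0:\fl^{\sigma}\hookrightarrow \mathfrak{gl}_{k}(\R)$ for some $k$. Extend it complex-linearly to
\[
\rho:=\C\otimes\rho_0:\fl\longrightarrow \mathfrak{gl}_{k}(\C),\qquad \rho(X+iY)=\rho_0(X)+i\rho_0(Y),\quad X,Y\in\fl^{\sigma}.
\]
This map is an injective homomorphism of complex Lie algebras. Injectivity holds because $\rho_0(X)$ and $\rho_0(Y)$ are real matrices, so $\rho_0(X)+i\rho_0(Y)=0$ forces $\rho_0(X)=\rho_0(Y)=0$ and hence $X=Y=0$. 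The homomorphism property is immediate from bilinearity of the bracket. Moreover,
\[
\rho(\sigma(X+iY))=\rho(X-iY)=\rho_0(X)-i\rho_0(Y)=\overline{\rho(X+iY)},
\]
so $\rho$ already intertwines $\sigma$ with entrywise conjugation.

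To match the stated even size $2m$, set $\phi:=\rho\oplus\rho$, the block-diagonal embedding into $\mathfrak{gl}_{2k}(\C)$, and take $m=k$. Alternatively, if $k$ happens to be even one can keep $\rho$ and use $2m=k$. The block-diagonal $\phi$ is still an injective homomorphism of complex Lie algebras and still satisfies $\phi(\sigma X)=\overline{\phi(X)}$.

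There is no real obstacle in this argument. The only input with genuine content is Ado's theorem applied over $\R$, which is valid because Ado's theorem holds over any field of characteristic zero. The one point needing care is to apply it to the real form $\fl^{\sigma}$ rather than to $\fl$ itself. An arbitrary complex embedding of $\fl$ need not be compatible with $\sigma$; complexifying a real embedding makes the compatibility automatic.
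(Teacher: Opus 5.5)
Your proof is correct, but it follows a different route from the paper's.

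The paper never passes to the real form. It quotes Lemma~3.1 of the cited reference, which supplies a complex embedding $\tau:\fl\to\mathfrak{gl}_{2m}(\C)$ with $\tau(\sigma(X))=S\,\overline{\tau(X)}\,S$ for the swap matrix $S$. A map of this kind can be obtained, for instance, as $X\mapsto \rho(X)\oplus\overline{\rho(\sigma(X))}$ from any faithful complex representation $\rho$ of $\fl$. The paper then removes the twist by conjugating with the Cayley-type matrix $P$, using $\overline{P}=PS$.

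You instead apply Ado's theorem over $\R$ to $\fl^{\sigma}$ and complexify. Compatibility with entrywise conjugation is then automatic, and the argument is self-contained rather than resting on an external lemma. In your version the even size $2m$ is an artifact, met by a trivial doubling; in the paper it is built into the twisted-doubling construction.

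What the paper's route buys is a procedure that upgrades an arbitrary complex embedding of $\fl$ into a $\sigma$-compatible one of twice the size. That matches how the lemma is used in the explicit realization of the family: "after replacing this embedding if necessary." Your route instead starts from the real form, which costs nothing, since $\fl^{\sigma}$ is always available and in that application it is just $\fg$.

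One wording point: ``view $\fl$ as a real Lie algebra, namely its real form $\fl^{\sigma}$'' conflates the realification of $\fl$ with the real form $\fl^{\sigma}$. What you actually use, correctly, is the decomposition $\fl=\fl^{\sigma}\oplus i\fl^{\sigma}$.
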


\begin{proof}
By Ref.~\cite[Lemma~3.1]{MR3797197} there exist $m\in \N$ and an embedding
\begin{equation}
\tau:\fl\longrightarrow \mathfrak{gl}_{2m}(\C)
\end{equation}
of complex Lie algebras such that
\begin{equation}
\tau(\sigma(X))=S\,\overline{\tau(X)}\,S,
\end{equation}
where
\begin{equation}
S=
\begin{pNiceArray}{c|c}
O_{m} & I_{m}\\
\hline
I_{m} & O_{m}
\end{pNiceArray}
\in \GL(2m,\R).
\end{equation}
Now let
\begin{equation}
P=
\begin{pNiceArray}{c|c}
I_{m} & I_{m}\\
\hline
iI_{m} & -iI_{m}
\end{pNiceArray}
\in \GL(2m,\C)
\end{equation}
and define
\begin{equation}
\phi(X)=P\tau(X)P^{-1}.
\end{equation}
A direct computation shows that
\begin{equation}
\phi(\sigma(X))=\overline{\phi(X)},
\qquad X\in \fl.
\end{equation}
\end{proof}

\printbibliography
\end{document}